\newcommand{\A}{\mathcal{A}}
\newcommand{\B}{\mathcal{B}}
\newcommand{\C}{\mathcal{C}}
\newcommand{\D}{\mathcal{D}}
\newcommand{\R}{\mathbb{R}}
\newcommand{\Sb}{\mathbb{S}}
\newcommand{\RL}{\mathbb{RL}}
\newcommand{\RH}{\mathbb{RH}}
\newcommand{\Fc}{\mathcal{F}}
\newcommand{\Mc}{\mathcal{M}}
\newcommand{\Lc}{\mathcal{L}}
\DeclareMathOperator{\col}{col}
\DeclareMathOperator{\diag}{diag}
\DeclareMathOperator{\sgn}{sgn}
\theoremstyle{plain}
\newtheorem{thm}{Theorem}
\newtheorem{prop}[thm]{Proposition}
\newtheorem{cor}[thm]{Corollary}
\theoremstyle{definition}
\newtheorem{dfn}{Definition}
\newtheorem{asmp}{Assumption}
\theoremstyle{remark}
\title{\LARGE \bf
Robust Contraction Analysis of Nonlinear Systems via Differential IQC
}
\author{Ruigang Wang and Ian R. Manchester
\thanks{This work was supported by the Australian Research Council.}
\thanks{The authors are with the Australian Centre for Field Robotics, The University of Sydney, Sydney, NSW 2006, Australia
        (e-mail: {\tt\small  ian.manchester@sydney.edu.au}).}%
}
\begin{document}

\maketitle
\thispagestyle{empty}
\pagestyle{empty}

\begin{abstract}
	We present a new approach to verifying contraction and $L_2$-gain of uncertain nonlinear systems, extending the well-known method of integral quadratic constraints. The uncertain system consists of a feedback interconnection of a nonlinear nominal system and uncertainties satisfying differential integral quadratic constraints. 
	 A pointwise linear matrix inequality condition is formulated to verify the closed-loop differential $ L_2 $ gain, which can lead to global reference-independent $ L_2 $ gain performance of the nonlinear uncertain system. For a polynomial nominal system, the convex verification conditions can be solved via sum-of-squares programming. A simple computational example based on jet-engine surge with input delays illustrates the approach.
\end{abstract}

\section{Introduction}
The integral quadratic constraint (IQC) approach of \cite{Megretski:1997} provides a flexible framework for robustness analysis of uncertain systems, and includes as special cases many previously-proposed methods. The basic setup of \cite{Megretski:1997} is an interconnection of a nominal system -- which is a stable, finite-dimensional linear time-invariant system -- and uncertainties that are known to satisfy integral quadratic constraints. The ``uncertainties'' may be known but ``troublesome" components (e.g., time delay or other infinite-dimensional dynamics, saturation or other nonlinearties) or an unknown but bounded dynamics dynamics. Closed-loop stability and performance can be verified by solving a linear matrix inequality (semidefinite program).

Advances in computational analysis of nonlinear systems, especially polynomial systems via sum-of-squares programming \cite{Parrilo:2003}, motivate extending the IQC approach to scenarios in which the ``nominal'' system is nonlinear, but still amenable to computation, e.g. described by a relatively low-degree polynomial vector field. However, many of the most powerful IQCs used in \cite{Megretski:1997} are ``soft'' IQCs, meaning that their time-domain representations do not necessarily hold for all finite times (so-called ``hard'' IQCs), as would be required for standard analysis methods of nonlinear systems \cite{Schaft:2017}.

 Recently, the work \cite{Seiler:2015} proved that, under rather mild conditions, most soft IQCs do in fact have hard representations, in particular if the associated frequency-domain multiplier admits a $ J $-factorization. Related results were obtained in \cite{veenman2013stability, Scherer:2018}. Based on this, several approaches have been developed to relax the assumption on nominal models. In \cite{Pfifer:2015,Wang:2016}, the IQC-based analysis framework has been applied to linear parameter-varying (LPV) systems. In \cite{Carrasco:2018}, a time-domain IQC theorem using graph separation was proposed for the feedback interconnection of two nonlinear systems. The stability condition was stated purely on input-output relations without involving linear matrix inequalities (LMIs).

The stability and performance notation used in the IQC framework is based on $ L_2 $ gain with respect to one preferred equilibrium (e.g., the origin). However, for many nonlinear systems, there are reasons to prefer a stronger stability analysis tool which is independent of the reference \cite{Wang:2017}. These cases often require the input-output stability to consider both \emph{boundedness} (i.e., bounded inputs produce bounded outputs) and \emph{continuity} (the outputs are not critically sensitive to small changes in inputs). Although both continuity and boundedness of input-output stability was introduced by Zames \cite{Zames:1966}, research on boundedness of solutions and stability of special solutions (e.g., a known equilibrium) have dominated existing literature. In \cite{Desoer:1975}, Desoer and Vidyasagar show that bounded and continuous solutions with respect to inputs can be implied by incremental $ L_2 $ stability while only boundedness can be ensured by the standard $ L_2 $-gain stability. The incremental IQC was briefly discussed in \cite{Megretski:1997} and later on applied to the harmonic analysis of uncertain systems in \cite{Rantzer:1997, Jonsson:2001}. However, it was proved in \cite{Kulkarni:2002} that stability multipliers such as Zames-Falb, Popov, and RL/RC multipliers are not directly applicable for incremental stability analysis as they cannot preserve incremental positivity. Recently, a weaker notation -- \emph{equilibrium-independent} IQC was introduced to describe nonlinear systems \cite{Summers:2013}. The frequency-domain IQC theorem is then applied for robustness analysis of networked passive systems with delays.  

Contraction \cite{Lohmiller:1998} is a strong form of stability meaning that, roughly speaking, all solutions converge to each other exponentially. The underlying idea is to investigate the local stability of the linearized system (differential dynamics) along any admissible trajectory, from whichglobal incremental stability can be established via integration of local analysis along certain path. The benefits of contraction analysis are two folds. First, no specific knowledge about the particular trajectory or reference signal is required during the analysis stage, which leads to a \emph{reference-independent} approach. Second, the analysis problem has a convex formulation using Riemannian metric \cite{Aylward:2008}, which allows for a numerically efficient optimization method -- sum-of-squares (SOS) programming \cite{Parrilo:2003}. Recently, a general differential Lyapunov framework based on Finsler metric was presented in \cite{Forni:2014}. The extension of differential Lyapunov theory to input/output dynamics -- differential dissipativity \cite{Forni:2013,Schaft:2013} was applied to system identification \cite{Tobenkin:2010}, robust analysis of a limit cycle \cite{Manchester:2014} and distributed control of chemical systems \cite{Wang:2017b}. A controller synthesis and realization framework based on control contraction metric (CCM) was developed in \cite{Manchester:2017}. It was then extended to performance design in \cite{Manchester:2018}. In \cite{Wang:2006}, contraction analysis based on certain metrics was applied to group cooperation subject to time-delayed communications. However, there are as yet few precise results on contraction of uncertain systems containing a broad range of ``troublesome'' components.  

In this paper, we present an approach to verify robust contraction  and performance ($L_2$ gain) of an uncertain nonlinear system by lifting the time-domain IQC theorem into the differential setting. The uncertain system consists of a feedback interconnection of a nominal nonlinear model and a perturbation. Here we assume that the nominal model belongs to a class of ``less troublesome'' nonlinear dynamics whose differential $ L_2 $ gain can be obtained via pointwise LMIs. A novel IQC, namely \emph{differential IQC}, is then introduced to replace the ``troublesome'' uncertainty. By using the results from IQC-based analysis for LPV systems \cite{Pfifer:2015, Pfifer:2016}, we develop a pointwise LMI condition which yields a bounded differential $ L_2 $ gain for the uncertain systems. Finally, we prove that global incremental $ L_2 $-gain performance can be guaranteed if the local analysis result is satisfied for all one-parameter solution families. We also show that a \emph{global} $ L_2 $-gain (a weaker notation compared with incremental one) bound with respect to certain reference trajectory can be inferred if the differential dissipation test is only validated for certain class of solution families.

The paper is structured as follows. Section~\ref{sec:background} presents the background on contraction analysis and IQC. Section~\ref{sec:DIQC} introduces the concept of $ \delta $-IQC. The main results on robust contraction analysis based on differential dissipativity and $ \delta $-IQCs are given in Section~\ref{sec:main}. Section~\ref{sec:example} provides a simple computational example.

\section{Background}\label{sec:background}

\subsection{Notation}

Most notation is from \cite{Zhou:1996}. 
For a matrix $ M\in\C^{m\times n} $, $ M' $ denotes the transpose and $ M^* $ denotes the conjugate transpose, respectively. The para-Hermitian conjugate of $ G\in\RH_\infty $, denoted as $ G^\sim $, is defined by $ G^\sim(s):=G(-s)' $. $ \mathcal{C}^k $ denotes the set of vector signals on $ \R $ which have $ k $th order derivative. $ \mathcal{L}_2 $ is the space of square-integrable vector signals on $ \R_{\geq 0} $, i.e., $ \|f\|:=(\int_{0}^{\infty}|f(t)|dt)^{1/2}<\infty $ where $ |x| $ denotes the Euclidean norm of a vector $ x $. The causal truncation $ (\cdot)_T $ is defined by $ (f)_T(t):=f(t) $ for $ t\leq T $ and 0 otherwise. $ \mathcal{L}_{2e} $ is the space of vector signals on $ \R_{\geq 0} $ whose causal truncation belongs to $ \mathcal{L}_2 $. Let $ ARE(A,B,Q,R,S) $ denote the following Algebraic Riccati Equation (ARE):
\begin{equation}
A'X+XA-(XB+S)R^{-1}(XB+S)'+Q=0.
\end{equation}
The stabilizing solution $ X\in\Sb $, if it exists, is such that $ A-BR^{-1}(XB+S)' $ is Hurwitz.

A Riemannian metric on $ \R^n $ is a symmetric positive definite matrix function $ M(x) $, smooth in $ x $, which defines a inner product $ \langle\delta_1,\delta_2 \rangle_x:=\delta_1'M(x)\delta_2 $ for any two tangent vector $ \delta_1,\delta_2 $. A metric is called \emph{uniformly bounded} if there exist positive constants $ a_2\geq a_1 $ such that $ a_1I\leq M(x)\leq a_2I,\,\forall x\in\R^n $. 
$ \Gamma(x_0,x_1) $ denotes the set of piecewise smooth paths $ c:[0,1]\rightarrow\R^n $ with $ c(0)=x_0 $ and $ c(1)=x_1 $. The curved length of $ c(\cdot) $ is defined by 
\begin{equation}
	\ell(c):=\int_{0}^{1}\sqrt{\langle c_s,c_s \rangle_{c(s)}}ds
\end{equation} 
where $ c_s=\partial c/\partial s $. The \emph{geodesic} $ \gamma(\cdot) $ denotes a path with the minimal length, i.e., $ \ell(\gamma)=\inf_{c\in\Gamma(x_0,x_1)}\ell(c) $. 
For more details see \cite{Do-Carmo:1992}. 

\subsection{Differential $ L_2 $ Gain via Contraction Analysis}
Consider a nonlinear system of the form
\begin{equation}\label{eq:system}
	\dot{x}=f(x,d), \quad e=h(x,d)
\end{equation}
where $ x(t)\in\R^{n_x} $, $ d(t)\in\R^{n_d} $, $ e(t)\in\R^{n_e} $ are the state, disturbance and performance output, respectively. For simplicity, $ f,h $ are assumed to be smooth and time-invariant. 

Instead of investigating stability properties of one solution $ (x,d,e)(\cdot) $, we are interested in an one-parameter solution family defined as follows \cite{Forni:2013}. In what follows we denote $ \rho:=(x,d,e) $. 
\begin{dfn}
	A set of solutions $ \Omega_{\rho}=\{\overline{\rho}(\cdot,s)\}_{s\in\R} $
	where $ \overline{\rho}(\cdot,0)=\rho(\cdot) $ is said to be an \emph{one-parameter solution family} to \eqref{eq:system} if  $ \overline{\rho}(\cdot,\cdot)\in\mathcal{C}^2 $ for almost all $ t\in\R_{\geq 0},\,s\in\R $, and $ \overline{\rho}(\cdot,s) $ satisfies \eqref{eq:system} for all $ s\in\R $. 
\end{dfn}
Compared to the time $ t $, variable $ s $ can be seen as a spatial parameter. The partial derivative $ \frac{\partial \overline{x}}{\partial t} $ at each $ t $ characterizes the time evolution of a curve $ \overline{x}(t,\cdot) $ subject to the nonlinear model \eqref{eq:system}. The partial derivative $ \frac{\partial \overline{x}}{\partial s} $ at each $ s $ characterizes the behavior of tangent vector that moves along the solution $ \overline{x}(\cdot,s) $, as shown in Fig.~\ref{fig:solution-familiy}. This behavior can be modeled by the following \emph{differential dynamics} (\cite{Lohmiller:1998}):
\begin{equation}\label{eq:diff-dynamics}
\begin{split}
\dot{\delta}_x&=A(x,d)\delta_x+B(x,d)\delta_d:=\frac{\partial f}{\partial x}\delta_x+\frac{\partial f}{\partial d}\delta_d \\ \delta_e&=C(x,d)\delta_x+D(x,d)\delta_d:=\frac{\partial h}{\partial x}\delta_x+\frac{\partial h}{\partial d}\delta_d
\end{split}
\end{equation}
where $ (\delta_x,\delta_d,\delta_e)=\left(\frac{\partial \overline{x}}{\partial s},\frac{\partial \overline{d}}{\partial s},\frac{\partial \overline{e}}{\partial s}\right) $. 

\begin{figure}[!bt]
	\centering
	\includegraphics[width=0.55\linewidth]{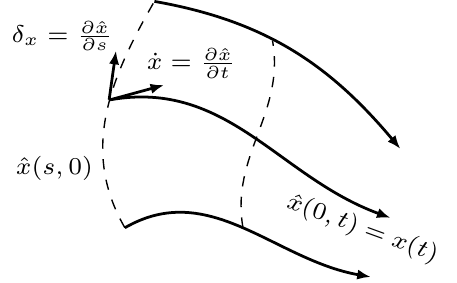}
	\caption{Geometrical interpretation of the differential dynamics based on an one-parameter solution family.}\label{fig:solution-familiy}
\end{figure}

The performance condition for nonlinear systems usually involves a bound on the $ L_2 $ gain from $ d $ to $ e $. There are several different formulations of $ L_2 $ gain. Since the standard $ L_2 $ gain does not consider continuity of input-output stability, this work investigates the following strong notations.
\begin{dfn}
	System \eqref{eq:system} is said to have an \emph{incremental} $ L_2 $-gain bound of $ \alpha>0 $ if for all pair of solutions with initial conditions $ x_0(0),x_1(0) $ and input $ d_0,d_1\in\Lc_{2e} $, and for all $ T>0 $ solutions exist and 
	\begin{equation}\label{eq:global-L2}
	\|(e_1-e_0)_T\|^2\leq \alpha^2\|(d_1-d_0)_T\|^2+b(x_0(0),x_1(0))
	\end{equation}
	for some function $ b(x_0,x_1)\geq 0 $ with $ b(x,x)=0 $.
\end{dfn}
\begin{dfn}
	System \eqref{eq:system} is said to have a \emph{global} $ L_2 $-gain bound of $ \alpha>0 $ if there exists a unique solution $ \rho^*(\cdot) $, any initial condition $ x(0) $ and input $ d-d^*\in\Lc_{2e} $, and for all $ T>0 $ solutions exist and 
	\begin{equation}
	\|(e-e^*)_T\|^2\leq \alpha^2\|(d-d^*)_T\|^2+b(x(0),x^*(0))
	\end{equation}
	for some function $ b(x_0,x_1)\geq 0 $ with $ b(x,x)=0 $. 
\end{dfn}
Note that global $ L_2 $ gain stronger than the equilibrium-independent counterpart \cite{Summers:2013}, but weaker than incremental one, since it only requires the convergence between a particular (i.e., system solution and reference trajectory) rather than arbitrary pair of trajectories, as shown in Fig.~\ref{fig:stability}. 

\begin{figure}[!bt]
	\centering
	\begin{tabular}{cc}
		\includegraphics[width=0.4\linewidth]{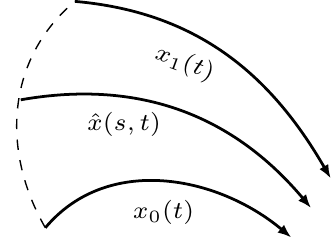} &
		\includegraphics[width=0.5\linewidth]{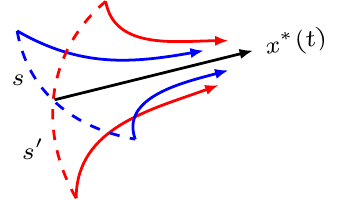} \\
		(a) incremental & (b) global
	\end{tabular}
	\caption{Different notations for reference-independent stability.}\label{fig:stability}
\end{figure}

We also work on the \emph{differential} $ L_2 $ gain, that is, system \eqref{eq:system} is said to have a differential $ L_2 $-gain bound of $ \alpha $ if for all $ T>0 $ 
\begin{equation}
	\|(\delta_e)_T\|^2\leq \alpha^2\|(\delta_d)_T\|^2+\beta(x(0),\delta_x(0))
\end{equation}
where $ \beta(x,\delta_x)\geq 0 $ with $ \beta(x,0)=0 $ for all $ x $. Applying standard results \cite[Th. 3.1.11]{Schaft:2017} to the joint system  \eqref{eq:system}, \eqref{eq:diff-dynamics} gives that a sufficient, and in some cases necessary, condition for the differential $ L_2 $-gain bound is the existence of a differential storage function $ V(x,\delta_x)\geq 0 $ with $ V(x,0)=0 $ such that
\begin{equation}\label{eq:diff-dissipation}
	\dot{V}(x,\delta_x)\leq \alpha^2|\delta_d|^2-|\delta_e|^2.
\end{equation}
Here we are interested in the differential storage function induced by a Riemannian metric, i.e., $ V(x,\delta_x)=\delta_x'\Mc(x)\delta_x $. Other  more general metrics are possible, see \cite{Forni:2013,Chaffey:2018}, however we focus on the Riemannian case as the analysis problem can have a convex formulation. 


\subsection{Integral Quadratic Constraints}
In the IQC framework, the uncertainty (either a known but ``difficult'' component or  unknown dynamics) is described by an operator $ \varDelta $, which refers to an input-output system 
\begin{equation}
	w=\varDelta(v).
\end{equation} 
An operator $ \varDelta $ is said to be \emph{casual} if $ \varDelta_T(v):=(w)_T=\varDelta((v)_T) $ for all $ T\geq 0 $. It is said to be \emph{bounded} if there exists $ C_1 $ such that $ \|\varDelta_T(v)\|\leq C_1\|(v)_T\| $ for all $ T>0 $ and $v\in \mathcal{L}_{2e} $.  The underlying idea of IQC analysis is to replace $ \varDelta $ with a constraint or set of constraints that it is known to satisfy, and which are convenient for analysis that is convenient for analysis via LMIs. In particular, the constraints are frequency-weighted integral inequalities of the form below.

\begin{dfn}\label{dfn:iqc-frequency}
	Let $ \Pi=\Pi^\sim\in\RL_{\infty}^{(n_v+n_w)\times(n_v+n_w)} $ be given. Two signals $ v\in \Lc_2^{n_v} $ and $ w\in \Lc_2^{n_w} $ satisfy the \emph{frequency domain IQC} defined by the multiplier $ \Pi $ if 
	\begin{equation}\label{eq:iqc-frequency}
	\int_{-\infty}^{\infty}
	\begin{bmatrix}
	\widehat{V}(j\omega) \\ \widehat{W}(j\omega)
	\end{bmatrix}^*\Pi(j\omega)
	\begin{bmatrix}
	\widehat{V}(j\omega) \\ \widehat{W}(j\omega)
	\end{bmatrix}d\omega \geq 0
	\end{equation}
	where $ \widehat{V} $ and $ \widehat{W} $ are Fourier transforms of $ v $ and $ w $. A bounded, causal operator $ \varDelta:\Lc_{2e}^{n_v}\rightarrow \Lc_{2e}^{n_w} $ satisfies the frequency domain IQC defined by $ \Pi $ if \eqref{eq:iqc-frequency} holds for all $ v\in \Lc_2^{n_v} $ and $ w=\varDelta(v) $.
\end{dfn}

IQCs can also be expressed in the time-domain via Parseval's theorem, and can be interpreted as comparing filtered versions of the input and output of $\Delta$ as in Fig. \ref{fig:iqc}.
\begin{dfn}\label{dfn:iqc-time}
	Let $ \Psi\in\RH_\infty^{n_z\times(n_v+n_w)} $ and $ M=M' $. Two signals $ v\in \Lc_{2}^{n_v},\,w\in \Lc_{2}^{n_w} $ satisfy the \emph{time domain IQC} defined by the multiplier $ \Psi $ and matrix $ M $ if the following inequality holds 
	\begin{equation}\label{eq:iqc-time}
	\int_{0}^{\infty}z'(t)Mz(t)dt\geq 0
	\end{equation}
	where $ z $ is the output of $ \Psi $ with the zero initial condition and inputs $ (v,w) $. A bounded, causal operator $ \varDelta $ satisfies the time domain IQC defined by $ (\Psi,M) $ if \eqref{eq:iqc-time} holds for all $ v\in \Lc_{2}^{n_v} $ and $ w=\varDelta(v) $.
\end{dfn}
The time domain IQC in \eqref{eq:iqc-time} is referred as a \emph{soft IQC} in \cite{Megretski:1997}, and it is important to note that it assumes that all signals are in $L^2$. If  the time domain constraint $ \int_{0}^{T}z'(t)Mz(t)dt\geq 0 $ holds for all $ T\geq 0 $, this is called a \emph{hard IQC}. The hard/soft property is not strictly inherent to the multiplier $ \Pi $ but instead depends on the non-unique factorization $ \Pi=\Psi^{\sim}M\Psi $. In particulary, a key additional assumptions on the multiplier $ \Pi $ is that it admits a J-spectral factorization, in which case a hard factorization can be constructed \cite{Seiler:2015}.

\begin{figure}[!bt]
	\centering
	\includegraphics[width=0.55\linewidth]{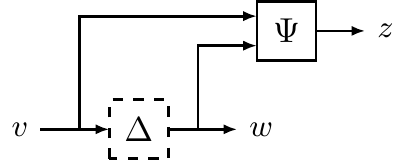}
	\caption{Graphical interpretation of the IQC.}\label{fig:iqc}
\end{figure}

\section{Differential IQC}\label{sec:DIQC}
The conventional IQC cannot be directly applied to contraction analysis since it is defined with respect to one preferred equilibrium. In this section, we will introduce the concept of differential IQC, which is used to replace the ``troublesome'' perturbation. 

An operator $ \varDelta $ is said to be \emph{locally affine bounded} if there exists $ C_0,\,C_1\geq 0 $ such that 
\begin{equation}
	\|\varDelta_T(v_1)-\varDelta_T(v_2)\|\leq C_0+C_1\|(v_1-v_2)_T\|
\end{equation}
for all $ T>0 $ and $v_i\in \Lc_{2e} $. It is called \emph{locally bounded} if this holds with $ C_0=0 $.
For a locally bounded operator $ \varDelta $, its associated differential operator $ \partial\varDelta $ can be defined by
\begin{equation}
	\delta_w=\partial\varDelta(v;\delta_v):=\limsup_{s\rightarrow 0^+}\frac{\varDelta(v+s\delta_v)-\varDelta(v)}{s}
\end{equation}
where $ v,\delta_v\in  \Lc_{2e} $. 

The differential IQC is an integral quadratic constraint specified on $ (\delta_v,\delta_w) $. The following definition is given in frequency domain (the time-domain definition is omitted due to space restrictions, but it is similar to Definition~\ref{dfn:iqc-time}).
\begin{dfn}
	An operator $ \varDelta: \Lc_{2e}^{n_v}\rightarrow \Lc_{2e}^{n_w} $ is said to satisfy the frequency-domain \emph{differential IQC} ($ \delta $-IQC) defined by the multiplier $ \Pi=\Pi^\sim\in\RL_{\infty}^{(n_v+n_w)\times(n_v+n_w)} $ if $ \varDelta $ is casual and locally bounded, and the following inequality holds
	\begin{equation}\label{eq:iqc-diff}
	\int_{-\infty}^{\infty}
	\begin{bmatrix}
	\hat{\delta}_v(j\omega) \\ \hat{\delta}_w(j\omega)
	\end{bmatrix}^*\Pi(j\omega)
	\begin{bmatrix}
	\hat{\delta}_v(j\omega) \\ \hat{\delta}_w(j\omega)
	\end{bmatrix}d\omega \geq 0
	\end{equation}
	where $ \hat{\delta}_v $ and $ \hat{\delta}_w $ are Fourier transforms of $ \delta_v \in \Lc_2^{n_v} $ and $ \delta_w=\partial\varDelta(v;\delta_v) \in \Lc_2^{n_w} $.
\end{dfn}
A similar definition can be given for incremental IQCs.
\begin{dfn}
	A causal operator $ \varDelta $ satisfies the frequency-domain \emph{incremental IQC}  defined by $ \Pi $ if for any pair of input-output trajectories $ (v_0,w_0)(\cdot) $ and $ (v_1,w_1)(\cdot) $ with $ d_v=v_1-v_0\in \Lc_2^{n_v} $ and $ d_w=\varDelta(v_1)-\varDelta(v_0)\in \Lc_2^{n_w} $, the following inequality holds
	\begin{equation}\label{eq:iqc-incremental}
	\int_{-\infty}^{\infty}
	\begin{bmatrix}
	\hat{d}_v(j\omega) \\ \hat{d}_w(j\omega)
	\end{bmatrix}^*\Pi(j\omega)
	\begin{bmatrix}
	\hat{d}_v(j\omega) \\ \hat{d}_w(j\omega)
	\end{bmatrix}d\omega \geq 0
	\end{equation}
	where $ \hat{d}_v $ and $ \hat{d}_w $ are Fourier transforms of $ d_v $ and $ d_w $. If the above condition only holds for a particular reference input (i.e.,$ v_0=v^* $), the operator $ \varDelta $ satisfies an \emph{global IQC} defined by $ \Pi $.
\end{dfn}
Note that a global IQC is just a standard IQC \cite{Megretski:1997} with respect to a trajectory which is not necessarily at the origin.

For locally bounded operators, $ \delta $-IQC and incremental IQCs are equivalent under a mild assumption on the multiplier, which is also important for hard factorization.
\begin{asmp}\label{asmp:1}
	The multiplier $ \Pi $ satisfies $ \Pi_{vv}(j\omega)\geq 0 $ and $ \Pi_{ww}(j\omega)\leq 0 $ for all $ \omega\in\R\cup\{\infty\} $, where $ \begin{bmatrix}
	\Pi_{vv} & \Pi_{vw} \\ \Pi_{vw}^\sim & \Pi_{ww}
	\end{bmatrix} $ with $ \Pi_{vv}\in\RL_{\infty}^{n_v\times n_v} $ is a partition of the multiplier $ \Pi $.
\end{asmp}

\begin{prop}\label{prop:1}
	Assume that Assumption~\ref{asmp:1} holds for the multiplier $ \Pi $. The locally bounded operator $ \varDelta $ satisfies the $ \delta $-IQC induced by $ \Pi $ for all one-parameter solution families $ \Omega_{\varrho} $ with $ \varrho:=(v,w) $ if and only if it satisfies the incremental IQC induced by $ \Pi $.
\end{prop}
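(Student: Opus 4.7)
The plan is to prove the two implications separately. Necessity (incremental $\Rightarrow$ $\delta$-IQC) is a differentiation argument along a one-parameter family, while sufficiency ($\delta$-IQC $\Rightarrow$ incremental) is an integration argument along a linear interpolation, in which Assumption~\ref{asmp:1} is invoked through Jensen's inequality.

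For necessity, fix an arbitrary one-parameter family $\overline{\varrho}(\cdot,s)=(\overline{v}(\cdot,s),\overline{w}(\cdot,s))$ with $\overline{w}(\cdot,s)=\varDelta(\overline{v}(\cdot,s))$, and apply the incremental IQC \eqref{eq:iqc-incremental} to the pair $(\overline{v}(\cdot,0),\overline{w}(\cdot,0))$ and $(\overline{v}(\cdot,\epsilon),\overline{w}(\cdot,\epsilon))$ for small $\epsilon>0$. Dividing through by $\epsilon^2$ and setting $d_v^\epsilon:=(\overline{v}(\cdot,\epsilon)-\overline{v}(\cdot,0))/\epsilon$ and analogously $d_w^\epsilon$, the $\mathcal{C}^2$-regularity of the family and existence of $\partial\varDelta$ (afforded by local boundedness) give $d_v^\epsilon\to\delta_v$ and $d_w^\epsilon\to\delta_w$ in $\mathcal{L}_2$. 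Since $\Pi\in\RL_\infty$, the quadratic form is continuous on $\mathcal{L}_2\times\mathcal{L}_2$ by Parseval, so passing to the limit yields \eqref{eq:iqc-diff}.

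For sufficiency, fix $(v_0,w_0=\varDelta(v_0))$ and $(v_1,w_1=\varDelta(v_1))$ with $d_v:=v_1-v_0\in\mathcal{L}_2$ and $d_w:=w_1-w_0\in\mathcal{L}_2$, and construct the linear interpolation family $\overline{v}(\cdot,s):=v_0+s\,d_v$, $\overline{w}(\cdot,s):=\varDelta(\overline{v}(\cdot,s))$ for $s\in[0,1]$. Along this family, $\delta_v(\cdot,s)\equiv d_v$ is constant while $\delta_w(\cdot,s)=\partial\varDelta(\overline{v}(\cdot,s);d_v)$ varies. Applying the $\delta$-IQC at each $s$ and integrating in $s\in[0,1]$, then interchanging the spatial and frequency integrals via Fubini, yields
\begin{equation*}
	0\leq \int_{-\infty}^{\infty}\!\!\bigl[\hat{d}_v^*\Pi_{vv}\hat{d}_v+2\,\mathrm{Re}\{\hat{d}_v^*\Pi_{vw}\hat{d}_w\}+\textstyle\int_0^1\hat{\delta}_w(s)^*\Pi_{ww}\hat{\delta}_w(s)\,ds\bigr]d\omega,
\end{equation*}
where the cross-term collapses via the fundamental theorem of calculus, $\int_0^1\delta_w(\cdot,s)\,ds=\overline{w}(\cdot,1)-\overline{w}(\cdot,0)=d_w$, and linearity of the Fourier transform. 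By Assumption~\ref{asmp:1}, $-\Pi_{ww}(j\omega)\geq 0$ pointwise, so Jensen's inequality applied to the convex map $y\mapsto y^*(-\Pi_{ww})y$ gives
\begin{equation*}
	\int_0^1\hat{\delta}_w(s)^*\Pi_{ww}\hat{\delta}_w(s)\,ds\leq \hat{d}_w^*\Pi_{ww}\hat{d}_w,
\end{equation*}
and substituting this lower bound into the display above recovers the incremental IQC \eqref{eq:iqc-incremental}.

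The main technical obstacle lies in the sufficiency direction: one must justify that $s\mapsto\overline{w}(\cdot,s)$ is absolutely continuous as an $\mathcal{L}_{2e}$-valued map so that both the FTC representation of $d_w$ and the Fubini exchange are legitimate. This is where local boundedness of $\varDelta$ and sufficient regularity of $\partial\varDelta$ along the linear interpolation are essential, and a fully rigorous treatment may require strengthening local boundedness to a Lipschitz-like property. The hypothesis $\Pi_{ww}\leq 0$ is used decisively to orient Jensen's inequality; the companion hypothesis $\Pi_{vv}\geq 0$ is not required for this particular equivalence but underpins the hard-factorization machinery employed elsewhere in the paper.
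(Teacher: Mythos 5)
Your proposal is correct and follows essentially the same route as the paper: the necessity direction is the same differentiation-along-the-family argument, and the sufficiency direction uses the same linear interpolation, integration over $s\in[0,1]$, and collapse of the $\Pi_{ww}$ term --- your Jensen step on the convex form $y\mapsto y^*(-\Pi_{ww})y$ is interchangeable with the paper's factorization $\Pi_{ww}=-\Lambda_w^\sim\Lambda_w$ followed by Cauchy--Schwarz. Your observations that the regularity of $s\mapsto\varDelta(\overline{v}(\cdot,s))$ is the real technical burden and that only $\Pi_{ww}\leq 0$ (not $\Pi_{vv}\geq 0$) is used in this equivalence are both accurate.
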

\begin{proof}[Proof of Proposition~\ref{prop:1}]
	``{\bf if}": For any $ v\in \Lc_{2e}^{n_v} $ and $ \delta_v\in \Lc_2^{n_v} $, we take $ v_0=v,w_0=w $ and $ v_1=v+s\delta_v $, $ w_1=\varDelta(v_1) $. For a sufficiently small $ s $, we have $ d_w=s\delta_w=s\partial \varDelta(v;\delta_v) $. Since $ \varDelta $ is locally bounded, there exists a set-valued map $ D_\varDelta $ such that $ \partial\varDelta(v;\delta_v)= D_\varDelta(v)\delta_v $. Condition \eqref{eq:iqc-diff} follows by substituting $ d_v=s\delta_v $ and $ d_w=sD_\varDelta(v)\delta_v $ into \eqref{eq:iqc-incremental}.
	
	``{\bf only if}": For any pair of input-output behaviors $ (v_0,w_0)$ and $(v_1,w_1) $ satisfying $ d_v, d_w\in \Lc_2  $, by taking the parameterization $ v(s)=(1-s)v_0+sv_1 $ and $ w(s)=\varDelta(v(s)) $, we have $ \delta_v(s)=d_v\in \Lc_2 $ and $ \delta_w(s)=\partial\varDelta(v(s),\delta_v(s))\in \Lc_2 $ as $ \partial \varDelta $ is bounded. Integration of \eqref{eq:iqc-diff} over $ s\in[0,1] $ yields (the dependence on $ j\omega $ is omitted for simplicity):
	\begin{equation}\label{eq:delta-Delta}
	\begin{split}
	0\leq& \int_{0}^{1}\int_{-\infty}^{\infty}
	\begin{bmatrix}
	\hat{\delta}_v(s) \\ \hat{\delta}_w(s)
	\end{bmatrix}^*\Pi
	\begin{bmatrix}
	\hat{\delta}_v(s) \\ \hat{\delta}_w(s)
	\end{bmatrix}d\omega ds \\
	=&\int_{-\infty}^{\infty}\int_{0}^{1}
	\begin{bmatrix}
	\hat{d}_v \\ \hat{\delta}_w(s)
	\end{bmatrix}^*
	\begin{bmatrix}
	\Pi_{vv} & \Pi_{vw} \\ \Pi_{vw}^\sim & \Pi_{ww}
	\end{bmatrix}
	\begin{bmatrix}
	\hat{d}_v \\ \hat{\delta}_w(s)
	\end{bmatrix}ds d\omega \\
	=&\int_{-\infty}^{\infty}
	\begin{bmatrix}
	\hat{d}_v \\ \hat{d}_w
	\end{bmatrix}^*
	\begin{bmatrix}
	\Pi_{vv} & \Pi_{vw} \\ \Pi_{vw}^\sim & 0
	\end{bmatrix}
	\begin{bmatrix}
	\hat{d}_v \\ \hat{d}_w
	\end{bmatrix}ds d\omega +\\
	&\; \int_{-\infty}^{\infty}\int_{0}^{1}\hat{\delta}_w^*(s)\Pi_{ww}\hat{\delta}_w(s)dsd\omega.
	\end{split}
	\end{equation}
	Since $ \Pi_{ww}(j\omega)=\Pi_{ww}^\sim(j\omega)\leq 0 $, it yields a factorization $ \Pi_{ww}=-\Lambda_w^\sim\Lambda_w $. From Cauchy-Schwarz inequality, we have
	\begin{equation}\label{eq:delta-Delta-2}
	\begin{split}
	&\int_{0}^{1}\hat{\delta}_w^*(s)\Pi_{ww}\hat{\delta}_w(s)ds=-\int_{0}^{1}|\Lambda_w\hat{\delta}_w(s)|^2ds \\
	&\quad \leq -\left|\int_{0}^{1}\Lambda_w\hat{\delta}_w(s)ds\right|^2
	=\hat{d}_w^*\Pi_{ww}\hat{d}_w.
	\end{split}
	\end{equation}
	The incremental IQC condition \eqref{eq:iqc-incremental} follows from \eqref{eq:delta-Delta}-\eqref{eq:delta-Delta-2}.
\end{proof}
From the above proposition, we have following corollaries.
\begin{cor}\label{coro:2}
	If the operator $ \varDelta $ is linear, then the $ \delta $-IQC induced by $ \Pi $ is equivalent to the IQC induced by $ \Pi $.
\end{cor}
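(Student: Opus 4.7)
The plan is to exploit the fact that for a linear operator the directional derivative collapses to the operator itself, so that the $\delta$-IQC condition reduces syntactically to the standard IQC condition. First I would verify the identity $\partial\varDelta(v;\delta_v)=\varDelta(\delta_v)$ for every $v,\delta_v\in\Lc_{2e}^{n_v}$. By linearity of $\varDelta$ we have $\varDelta(v+s\delta_v)-\varDelta(v)=s\varDelta(\delta_v)$, and the $\limsup$ in the definition of $\partial\varDelta$ therefore exists as an ordinary limit equal to $\varDelta(\delta_v)$. In particular $\delta_w$ is independent of the base trajectory $v$.

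Next I would use this identity to convert each direction of the equivalence into a trivial relabeling. For the ``$\delta$-IQC $\Rightarrow$ IQC'' direction, take any $v\in\Lc_2^{n_v}$ with $w=\varDelta(v)\in\Lc_2^{n_w}$ and set $\delta_v:=v$, $\delta_w:=\varDelta(\delta_v)=w$ in \eqref{eq:iqc-diff}; the frequency-domain inequality on $(\hat{\delta}_v,\hat{\delta}_w)$ is then literally the IQC condition \eqref{eq:iqc-frequency} on $(\widehat{V},\widehat{W})$. For the converse direction, given any $\delta_v\in\Lc_2^{n_v}$ and any base $v\in\Lc_{2e}^{n_v}$, the $\delta$-IQC integrand depends only on $\hat{\delta}_v$ and $\widehat{\varDelta(\delta_v)}$, so invoking the standard IQC on the pair $(\delta_v,\varDelta(\delta_v))\in\Lc_2^{n_v}\times\Lc_2^{n_w}$ immediately yields \eqref{eq:iqc-diff}. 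Local boundedness of $\varDelta$, which is required in the definition of $\delta$-IQC, follows from boundedness of the linear operator.

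There is no real obstacle here; the only point that deserves a line of care is the passage from the $\limsup$ in the definition of $\partial\varDelta$ to an actual limit, which is immediate once linearity is used to cancel $\varDelta(v)$ in the difference quotient. Consequently the statement follows without invoking Proposition~\ref{prop:1} or Assumption~\ref{asmp:1}, making the corollary a genuine simplification rather than a specialization of the proposition.
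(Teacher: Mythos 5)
Your proof is correct, but it takes a different route from the paper. The paper gives no standalone argument for this corollary: it presents it as a consequence of Proposition~\ref{prop:1}, i.e.\ the chain ``$\delta$-IQC for all solution families $\Leftrightarrow$ incremental IQC'' (which needs Assumption~\ref{asmp:1} on the signs of $\Pi_{vv}$ and $\Pi_{ww}$, plus the Cauchy--Schwarz/factorization argument for the ``only if'' direction), combined with the observation that for linear $\varDelta$ the incremental IQC on $(v_1-v_0,\varDelta(v_1)-\varDelta(v_0))$ collapses to the standard IQC. Your argument instead goes straight to the identity $\partial\varDelta(v;\delta_v)=\varDelta(\delta_v)$, after which the two frequency-domain inequalities \eqref{eq:iqc-diff} and \eqref{eq:iqc-frequency} are the same statement up to renaming the signal, and you correctly note that boundedness of the linear operator supplies the local boundedness needed for $\partial\varDelta$ to be defined and for $\delta_w\in\Lc_2$. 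What your route buys is genuine: the equivalence holds for \emph{any} multiplier $\Pi$, with no sign conditions, and the quantifier ``for all one-parameter solution families'' becomes vacuous since $\delta_w$ does not depend on the base trajectory. What the paper's route buys is the intermediate fact that a linear operator satisfies the incremental IQC iff it satisfies the standard one, which is the form actually used downstream; but as a proof of the corollary as stated, your direct argument is cleaner and strictly more general.
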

\begin{cor}
	If the operator $ \varDelta $ satisfies the $ \delta $-IQC induced by $ \Pi $ for all $ \Omega_{\varrho^*} $ with a fixed $ \varrho^* $, then it satisfies the global IQC induced by $ \Pi $.
\end{cor}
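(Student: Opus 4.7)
The plan is to mimic the ``only if'' direction of Proposition~\ref{prop:1}, but restricted to the single one-parameter family that starts at the reference trajectory $\varrho^{*}=(v^{*},w^{*})$. Assumption~\ref{asmp:1} is inherited implicitly from Proposition~\ref{prop:1} (the step using $\Pi_{ww}\leq 0$ and Cauchy--Schwarz is unavoidable), so I will assume it is in force.

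First, fix any pair $(v_1,w_1)$ with $d_v:=v_1-v^{*}\in\Lc_2^{n_v}$ and $d_w:=\varDelta(v_1)-\varDelta(v^{*})\in\Lc_2^{n_w}$. Define the convex parameterization
\begin{equation*}
v(s):=(1-s)v^{*}+s v_1,\qquad w(s):=\varDelta(v(s)),\qquad s\in[0,1],
\end{equation*}
which is a one-parameter family belonging to $\Omega_{\varrho^{*}}$ because $v(0)=v^{*}$ and $w(0)=w^{*}$. The associated variational signals are $\delta_v(s)=d_v$, independent of $s$, and $\delta_w(s)=\partial\varDelta(v(s);d_v)$, which lies in $\Lc_2$ by local boundedness of $\varDelta$. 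Hence the hypothesis gives that \eqref{eq:iqc-diff} holds for each $s\in[0,1]$.

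Next, I would integrate \eqref{eq:iqc-diff} over $s\in[0,1]$, swap the order of integration with $\omega$ (Fubini is fine since $\Pi\in\RL_\infty$ and the signals are $\Lc_2$), and split the integrand into the blocks indexed by the partition $\Pi_{vv},\Pi_{vw},\Pi_{ww}$. Because $\delta_v(s)\equiv d_v$ does not depend on $s$, the $\Pi_{vv}$ and cross-term contributions integrate trivially to the form already involving $\hat d_v$ and $\int_0^1\hat\delta_w(s)\,ds=\hat d_w$ (by the fundamental theorem of calculus applied to $w(s)$). This reproduces the first two displays in \eqref{eq:delta-Delta}.

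Finally, the only term that remains is $\int_{-\infty}^{\infty}\int_{0}^{1}\hat\delta_w^{*}(s)\Pi_{ww}\hat\delta_w(s)\,ds\,d\omega$, which, by Assumption~\ref{asmp:1} and the Cauchy--Schwarz estimate \eqref{eq:delta-Delta-2} used in the proof of Proposition~\ref{prop:1}, is bounded above by $\int_{-\infty}^{\infty}\hat d_w^{*}\Pi_{ww}\hat d_w\,d\omega$. Collecting the bounds yields \eqref{eq:iqc-incremental} with $(v_0,w_0)=(v^{*},w^{*})$, which is exactly the global IQC of Definition~\ref{dfn:iqc-time}-style applied to the reference trajectory $\varrho^{*}$. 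The only conceptual subtlety is ensuring that every family used in the proof of Proposition~\ref{prop:1} passes through $\varrho^{*}$; the linear homotopy from $v^{*}$ to $v_1$ does so by construction, so no additional families are required beyond those allowed by the hypothesis.
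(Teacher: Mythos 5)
Your proposal is correct and matches the paper's intended argument: the corollary is stated as an immediate consequence of Proposition~\ref{prop:1}, obtained by running the ``only if'' direction with the linear homotopy anchored at the fixed reference $\varrho^*$, exactly as you do. Your explicit remarks that Assumption~\ref{asmp:1} must still be in force and that the homotopy passes through $\varrho^*$ by construction are the right (and only) points to check.
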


For general locally affine bounded operators, the corresponding differential operators may not be well-defined since even a small input may produce large output (e.g., the relay operator $ w=\sgn(v) $), therefore, one cannot find any $ \delta $-IQC. However, it may be possible to construct artificial feedback loops encapsulating those operators \cite{Rantzer:1997} which enable $\delta$-IQC analysis. For example, consider the following uncertain system
\begin{equation}\label{eq:encapsulation}
		\dot{y}=-ay-\varDelta_f(y)+v 
\end{equation}
where $ \varDelta_f $ is a viscous friction operator defined by $ \varDelta_f(y)=\sgn(y)(b|y|+c) $ with $ a,b, c>0 $.
The operator $ w=\varDelta_f(y) $ is not locally bounded. However, a locally bounded operator can be constructed via the feedback encapsulation shown in Fig.~\ref{fig:encapsulation}, for which we can find a $\delta$-IQC.
\begin{prop}
	The operator $\varDelta:v\mapsto y$ defined in \eqref{eq:encapsulation} satisfies a differential $L_2$ gain bound of $ \frac{1}{a+b}$.
\end{prop}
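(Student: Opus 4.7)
The plan is to reduce the analysis to a storage-function check on a simple one-dimensional linear differential system. I would first rewrite the friction operator as $\varDelta_f(y)=by+c\sgn(y)$, so that its derivative along any smooth solution $y(\cdot)$ of \eqref{eq:encapsulation} equals $b$ almost everywhere (the zero-crossing set $\{t:y(t)=0\}$ has Lebesgue measure zero by uniqueness of solutions to the ODE). Local boundedness of the closed-loop operator $\varDelta:v\mapsto y$ follows from the monotonicity of $\varDelta_f$: if $y_1,y_2$ solve \eqref{eq:encapsulation} for inputs $v_1,v_2$, then $(\varDelta_f(y_1)-\varDelta_f(y_2))(y_1-y_2)\ge b(y_1-y_2)^2$, and a standard comparison argument bounds $\|(y_1-y_2)_T\|$ by $\|(v_1-v_2)_T\|$.

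Next, I would linearize along an arbitrary solution of the one-parameter family. Differentiating \eqref{eq:encapsulation} in the spatial parameter $s$ yields, almost everywhere in $t$,
\begin{equation*}
\dot{\delta}_y=-(a+b)\delta_y+\delta_v,
\end{equation*}
because only the $by$ part of $\varDelta_f$ contributes away from the measure-zero zero-crossing set, while the monotone impulsive term from $c\sgn(y)$ can only improve dissipation and so is safe to drop.

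Then I would propose the constant Riemannian metric $\mathcal{M}(y)\equiv 1/(a+b)$, equivalently the differential storage function $V(\delta_y)=\frac{1}{a+b}\delta_y^2$, and verify the differential dissipation inequality \eqref{eq:diff-dissipation} with $\alpha=1/(a+b)$ by a direct square completion:
\begin{equation*}
\dot V+\delta_y^2-\frac{1}{(a+b)^2}\delta_v^2=-2\delta_y^2+\frac{2}{a+b}\delta_y\delta_v+\delta_y^2-\frac{1}{(a+b)^2}\delta_v^2=-\left(\delta_y-\frac{1}{a+b}\delta_v\right)^{\!2}\le 0.
\end{equation*}
Integrating from $0$ to $T$ and using $V\ge 0$ then gives $\|(\delta_y)_T\|^2\le \frac{1}{(a+b)^2}\|(\delta_v)_T\|^2+V(\delta_y(0))$, which is the claimed differential $L_2$-gain bound with $\beta(y(0),\delta_y(0))=V(\delta_y(0))$.

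The main obstacle is technical rather than conceptual: rigorously justifying that the distributional derivative of $c\sgn(y)$ at zero crossings can be neglected. The cleanest route is a smoothing argument — replace $\sgn$ by a family of smooth monotone approximations $\sigma_\varepsilon$, carry out the storage-function calculation above (which never uses the derivative of $\sigma_\varepsilon$, only the monotonicity of $\sigma_\varepsilon$ implied by a nonnegative derivative that only improves dissipation), and then pass $\varepsilon\to 0^+$. Since the metric and the bound are both independent of $\varepsilon$, the conclusion survives the limit.
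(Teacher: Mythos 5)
Your proof is correct and follows essentially the same route as the paper: the paper also regularises the $c\,\sgn(y)$ term by a smooth monotone approximation ($c\tanh(y/\epsilon)$), observes that the resulting extra term $-(c/\epsilon)(1-\tanh^2(y/\epsilon))\delta_y$ in the differential dynamics is pure additional damping and so can only improve the $L_2$ gain of $\dot{\delta}_y=-(a+b)\delta_y+\delta_v$, and then passes to the limit $\epsilon\rightarrow 0^+$ citing the regularisation results it references. Your explicit storage function $V(\delta_y)=\frac{1}{a+b}\delta_y^2$ and completion of squares merely spells out the gain bound that the paper asserts for the linear part.
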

\begin{proof}
	We sketch a proof based on the regularisation approach of \cite{fiore2016contraction}. The operator $ \varDelta:v\mapsto y $ is locally bounded. The following smooth system
$
		\dot{y}=-(a+b)y-c\tanh(y/\epsilon)+v
$	with $ \epsilon>0 $ approaches \eqref{eq:system} when $ \epsilon\rightarrow 0 $. The differential dynamics of this approximation are
	\begin{equation}
		\dot{\delta}_y=-(a+b)\delta_y-c/\epsilon(1-\tanh^2(y/\epsilon))\delta_y+\delta_v.
	\end{equation}
	Since $ 1-\tanh^2(y/\epsilon)\geq 0 $, the above differential dynamics has a $ L_2 $-gain bound of $ \frac{1}{a+b} $ for all $ \epsilon>0 $. The claim follows by taking $ \epsilon\rightarrow 0 $ and the results of \cite{fiore2016contraction}.
\end{proof}

\begin{figure}[!bt]
	\centering
	\includegraphics[width=0.7\linewidth]{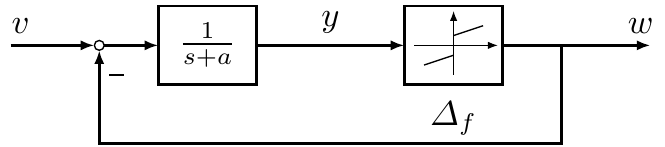}
	\caption{Bounded feedback encapsulation of a viscous friction.}\label{fig:encapsulation}
\end{figure}

\section{Robust Contraction Analysis}\label{sec:main}
The main results of this paper prove robustness results for a nominal nonlinear system in feedback with uncertainties satisfying differential IQCs.

\subsection{Problem Statement}
We consider a robust contraction analysis problem for uncertain nonlinear systems. Here the perturbed system is described by the feedback interconnection of a nominal nonlinear system $ G $ and an uncertainty $ \varDelta $ as shown in Fig.~\ref{fig:feedback}. This feedback interconnection with $ \varDelta $ wrapped around the top of $ G $ is denoted $ \Fc_u(G,\varDelta) $. The nominal system $ G $ is represented by
\begin{equation}\label{eq:nominal-sys}
	\dot{x}=f(x,w,d),\; v=g(x,w,d),\; e=h(x,w,d)
\end{equation}
where $ x(t)\in\R^{n_x} $ is the state, $ w(t)\in\R^{n_w} $ and $ d(t)\in\R^{n_d} $ are inputs, and $ v(t)\in\R^{n_v} $ and $ e(t)\in\R^{n_e} $ are outputs.  For simplicity, $ f,g,h $ are assumed to be smooth and time-invariant. Moreover, the following assumptions are made regarding $ G $ and $ \varDelta $.
\begin{figure}[!bt]
	\centering
	\includegraphics[width=0.45\linewidth]{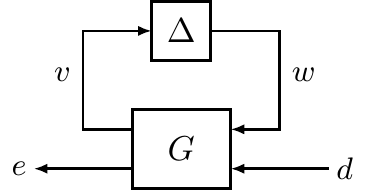}
	\caption{Feedback interconnection}\label{fig:feedback}
\end{figure}
\begin{asmp}\label{asmp:2}
	The nominal nonlinear system $ G $ has a differential $ L_2 $-gain bound from $ \col(w,d) $ to $ \col(v,e) $.
\end{asmp}
\begin{asmp}\label{asmp:3}
	The uncertainty $ \varDelta $ satisfies a collection of frequency domain $ \delta $-IQCs defined by $ \{\Pi_k\}_{1\leq k\leq N} $ with $ \Pi_k\in \RL_{\infty}^{(n_v+n_w)\times(n_v+n_w)} $ satisfying Assumption~\ref{asmp:1}, denoted by $ \partial \varDelta\in \partial\mathbf{\Delta}(\Pi_1,\ldots,\Pi_N) $.
\end{asmp}

\begin{asmp}\label{asmp:4}
	The overall uncertainty satisfies a differential $L^2$ gain bound, and has been normalized to satisfy $ \|\partial \varDelta\|\leq 1 $, so the first $ \delta $-IQC is defined by the multiplier $ \Pi_1=\diag(I_{n_v},-I_{n_w}) $.
\end{asmp}

All these assumptions are used to simplify the algorithm. From Assumption~\ref{asmp:2}, the nominal system is a ``less troublesome'' nonlinear dynamics since the differential $ L_2 $-gain implies bounded and continuous outputs with respect to inputs. The $ \delta $-IQCs in Assumption~\ref{asmp:3} are used to bounded the differential input/output behavior of the perturbation $ \varDelta $. Assumption~\ref{asmp:1} and \ref{asmp:4} are used to ensure that a ``combined'' multiplier $ \Pi_{\lambda}=\sum_{k=1}^{N}\lambda_k\Pi_k $ is a hard IQC and has a $ J $-spectral factorization for all $ \lambda\in\Lambda:=\{\lambda\in\R^{N}\mid \lambda_1>0,\,\lambda_k\geq 0,\, 2\leq k\leq N\} $.  Similar assumptions are made in \cite{Pfifer:2016} for LPV robustness analysis.


\subsection{Robust Performance Condition}
In this section, a pointwise LMI is developed to verify the differential dissipation condition \eqref{eq:diff-dissipation} for the uncertain system $ \Fc_u(G,\varDelta) $. First, the differential dynamics of the nominal system $ G $ can be represented by
\begin{equation}\label{eq:delta-G}
	\begin{bmatrix}
		\dot{\delta}_x \\ \delta_v \\ \delta_e
	\end{bmatrix}=
	\begin{bmatrix}
		A_x(\rho) & B_{xw}(\rho) & B_{xd}(\rho) \\
		C_v(\rho) & D_{vw}(\rho) & D_{vd}(\rho) \\
		C_e(\rho) & D_{ew}(\rho) & D_{ed}(\rho)
	\end{bmatrix}
	\begin{bmatrix}
		\delta_x \\ \delta_w \\ \delta_d
	\end{bmatrix}
\end{equation}
where $ A_x=\frac{\partial f}{\partial x} $, $ B_w=\frac{\partial f}{\partial w} $, $ B_d=\frac{\partial f}{\partial d} $, $ C_v=\frac{\partial g}{\partial x} $, $ D_{vw}=\frac{\partial g}{\partial w} $, $ D_{vd}=\frac{\partial g}{\partial d} $, $ C_e=\frac{\partial h}{\partial x} $, $ D_{ew}=\frac{\partial h}{\partial w} $ and $ D_{ed}=\frac{\partial h}{\partial d} $. Let $ (\Psi_k,M_k) $ be a factorization for $ \Pi_k $ and $ \Psi $ be the aggregated system of $ \{\Psi_k\}_{1\leq k\leq N} $ which can yield a (minimal) state-space realization with differential dynamics as follows:
\begin{equation}\label{eq:delta-psi}
	\begin{bmatrix}
		\dot{\delta}_\psi \\ \delta_{z_k}
	\end{bmatrix}=
	\begin{bmatrix}
		A_\psi & B_{\psi v} & B_{\psi w} \\
		C_{z_k} & D_{z_k v} & D_{z_k w}
	\end{bmatrix}
	\begin{bmatrix}
		\delta_\psi \\ \delta_v \\ \delta_w
	\end{bmatrix},\; 1\leq k\leq N
\end{equation}
where $ \delta_\psi\in\R^{n_\psi} $ is the filter state, $ \delta_{z_k} $ is the output of the filter $ \Psi_k $ driven by the signals $ (\delta_v,\delta_w) $. From \eqref{eq:delta-G} and \eqref{eq:delta-psi}, we can obtain the extended system of $ \delta G $ and $ \Psi $ as follows
\begin{equation}\label{eq:extend-system-1}
	\begin{bmatrix}
	\dot{\delta}_\chi \\ \delta_{z_k} \\ \delta_e
	\end{bmatrix}=
	\begin{bmatrix}
	\A(\rho) & \B_{w}(\rho) & \B_{d}(\rho) \\
	\C_{z_k}(\rho) & \D_{z_kw}(\rho) & \D_{z_kd}(\rho) \\
	\C_e(\rho) & \D_{ew}(\rho) & \D_{ed}(\rho)
	\end{bmatrix}
	\begin{bmatrix}
	\delta_\chi \\ \delta_w \\ \delta_d
	\end{bmatrix},\; 1\leq k\leq N
\end{equation}
where $ \chi=\col(x,\psi)\in\R^{n_x+n_\psi} $. The extended system can be expressed in terms of the state matrices in \eqref{eq:delta-G} and \eqref{eq:delta-psi}.

The following result establishes a pointwise LMI condition for differential dissipativity \eqref{eq:diff-dissipation} of the closed-loop system, and can be seen as an application of the method of \cite{Pfifer:2016} to the differential dynamics \eqref{eq:extend-system-1}.

\begin{prop}\label{prop:differential-L2-gain}
	Let the nominal system $ G $ and the uncertainty $ \varDelta $ satisfy Assumption~\ref{asmp:2}-\ref{asmp:4}. The feedback interconnection $ \Fc_u(G,\varDelta) $ has a differential $ L_2 $-gain bound of $ \alpha $ if there exists a smooth state-dependent matrix function $ P(x)=P'(x) $ and a multiplier coefficient vector $ \lambda\in\Lambda $ such that the following pointwise LMI (omitting $ \rho $ for a shorter notation)
	\begin{equation}\label{eq:LMI-soft}
		\begin{split}
		&
		\begin{bmatrix}
			P\A+\A'P+\dot{P} & P\B_w & P\B_d \\
			\B_w'P & 0 & 0 \\
			\B_d'P & 0 & -\alpha^2I
		\end{bmatrix}+
		\begin{bmatrix}
		\C_e' \\ \D_{ew}' \\ \D_{ed}'
		\end{bmatrix}
			\begin{bmatrix}
		\C_e' \\ \D_{ew}' \\ \D_{ed}'
		\end{bmatrix}' \\
		&\quad +\sum_{k=1}^{N}\lambda_k
		\begin{bmatrix}
		\C_{z_k}' \\ \D_{z_kw}' \\ \D_{z_kd}'
		\end{bmatrix}M_k
		\begin{bmatrix}
		\C_{z_k}' \\ \D_{z_kw}' \\ \D_{z_kd}'
		\end{bmatrix}'<0
		\end{split} 
	\end{equation}
	holds for all $ \rho $. 
\end{prop}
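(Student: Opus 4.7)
The plan is to use the quadratic form $V(\chi,\delta_\chi)=\delta_\chi' P(x)\delta_\chi$ on the extended state $\chi=\col(x,\psi)$ as the differential storage function, and to extract a differential dissipation inequality by bracketing the LMI \eqref{eq:LMI-soft} with the vector $\col(\delta_\chi,\delta_w,\delta_d)$. Differentiating $V$ along \eqref{eq:extend-system-1} gives
\begin{equation*}
\dot V = \delta_\chi'\bigl(P\A+\A'P+\dot P\bigr)\delta_\chi+2\delta_\chi'P\B_w\delta_w+2\delta_\chi'P\B_d\delta_d.
\end{equation*}
Left/right multiplication of \eqref{eq:LMI-soft} by $\col(\delta_\chi,\delta_w,\delta_d)$ and its transpose therefore produces, pointwise in time,
\begin{equation*}
\dot V + |\delta_e|^2 - \alpha^2|\delta_d|^2 + \sum_{k=1}^{N}\lambda_k\,\delta_{z_k}'M_k\delta_{z_k} < 0.
\end{equation*}

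Next I would integrate this inequality from $0$ to an arbitrary $T>0$. The filter \eqref{eq:delta-psi} is initialised at $\delta_\psi(0)=0$, so $V(0)=\delta_x(0)'P_{xx}(x(0))\delta_x(0)$ with $P_{xx}$ the upper-left block of $P$; and $V(T)\ge 0$ since $P\succ 0$ (implicit in the LMI via the diagonal block structure). The remaining obstacle is to show that the aggregated IQC term is nonnegative over the finite horizon, i.e.
\begin{equation*}
\int_{0}^{T}\sum_{k=1}^{N}\lambda_k\,\delta_{z_k}'(t)M_k\delta_{z_k}(t)\,dt \ge 0 \quad \forall T>0.
\end{equation*}
This is the crux of the argument, and is precisely where Assumptions~\ref{asmp:1} and \ref{asmp:4} enter: for any $\lambda\in\Lambda$ the combined multiplier $\Pi_\lambda=\sum_k\lambda_k\Pi_k$ satisfies $\Pi_{\lambda,vv}\succ 0$ and $\Pi_{\lambda,ww}\prec 0$, which by the J-spectral factorization results of \cite{Seiler:2015} (extended to our setting via \cite{Pfifer:2016}) converts the soft frequency-domain $\delta$-IQC of Assumption~\ref{asmp:3} into a hard time-domain $\delta$-IQC on $\partial\varDelta$. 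Because the filter $\Psi$ can be chosen consistently with this hard factorization, the finite-horizon inequality above holds.

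Combining the integrated dissipation inequality with the hard IQC bound yields
\begin{equation*}
V(T)-V(0)+\int_{0}^{T}\!\bigl(|\delta_e|^2-\alpha^2|\delta_d|^2\bigr)\,dt\le 0,
\end{equation*}
which, together with $V(T)\ge 0$, gives
\begin{equation*}
\|(\delta_e)_T\|^2\le \alpha^2\|(\delta_d)_T\|^2+\beta(x(0),\delta_x(0)),
\end{equation*}
with $\beta(x,\delta_x):=\delta_x'P_{xx}(x)\delta_x$, which clearly satisfies $\beta(x,0)=0$. This establishes the claimed differential $L_2$-gain bound of $\alpha$ for $\Fc_u(G,\varDelta)$. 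The main technical hurdle in this plan is the hard-IQC conversion step; once that is invoked, the remainder is a routine storage-function argument applied to the extended differential system \eqref{eq:extend-system-1}.
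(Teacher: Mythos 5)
Your overall architecture (quadratic storage on the extended state, bracket the LMI with $\col(\delta_\chi,\delta_w,\delta_d)$, integrate, discharge the IQC term) matches the paper's, and you correctly identify the hard-IQC conversion as the crux. But the way you discharge that crux has a genuine gap. You keep the original factorizations $(\Psi_k,M_k)$ and assert that $\int_0^T\sum_k\lambda_k\,\delta_{z_k}'M_k\delta_{z_k}\,dt\ge 0$ for every finite $T$. That inequality is precisely the statement that the aggregated factorization is \emph{hard}, and it is generally false: as the paper stresses, hardness is a property of the particular factorization $\Pi=\Psi^\sim M\Psi$, not of the multiplier, and the given $(\Psi_k,M_k)$ only yield soft $\delta$-IQCs. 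The $J$-spectral factorization of $\Pi_\lambda$ produces a \emph{different} filter $\widetilde{\Psi}_\lambda$ with output $\delta_{z_\lambda}$; the hard inequality $\int_0^T\delta_{z_\lambda}'\widetilde{M}_\lambda\delta_{z_\lambda}\,dt\ge 0$ applies to that output, and its integrand differs from yours by a term that integrates to $\delta_\psi(T)'X\delta_\psi(T)$, where $X$ is the stabilizing solution of $ARE(A_\psi,B_\psi,Q_\lambda,S_\lambda,R_\lambda)$. You cannot simply ``choose $\Psi$ consistently with the hard factorization'' after the fact, because the LMI \eqref{eq:LMI-soft} is written in terms of the original $\C_{z_k},\D_{z_kw},\D_{z_kd}$; changing the filter changes the LMI.

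The paper closes this gap by an explicit change of variables: it shows \eqref{eq:LMI-soft} is equivalent to the LMI \eqref{eq:LMI-hard} written for the $J$-spectrally factorized filter, with the shifted storage matrix $\widetilde{P}=P-\diag(0,X)\ge 0$. The shift by $X$ is exactly what absorbs the discrepancy between the soft and hard integrands, and $\widetilde{P}\ge 0$ (not $P>0$) is what makes the storage nonnegative. This exposes a second flaw in your write-up: the claim that $P\succ 0$ is ``implicit in the LMI via the diagonal block structure'' is unsupported --- \eqref{eq:LMI-soft} does not force $P>0$, and with soft multipliers $P$ is typically indefinite on the filter states. To repair your argument you would need to (i) introduce $X$ and $\widetilde{P}$, (ii) verify the algebraic equivalence of the two LMIs, and (iii) rerun the storage argument with $\widetilde{V}=\delta_\chi'\widetilde{P}\delta_\chi$ and the hard term $\int_0^T\delta_{z_\lambda}'\widetilde{M}_\lambda\delta_{z_\lambda}\,dt\ge 0$ --- which is exactly the paper's route.
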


Note that if the nominal system is polynomial, the above pointwise LMI condition can be efficiently solved for using SOS programming \cite{Parrilo:2003}.



The main part of the proof parallels Theorem 2 in \cite{Pfifer:2016}, since the evaluation of the differential dynamics \eqref{eq:delta-G} along any particular trajectory gives an LPV system. With this in mind, we simply sketch the proof here.

The key step is to prove that \eqref{eq:LMI-soft} is equivalent to a new LMI formulation which involves a single, hard $ \delta $-IQC and a new matrix function $ \tilde{P}(\chi)\geq 0 $ for all $ \chi $. First, the combined multiplier $ \Pi_\lambda $ has a factorization $ (\Psi_\lambda,M_\lambda) $ where $ \Psi_\lambda=\begin{bmatrix}(sI-A_\psi)^{-1}B_\psi \\ I\end{bmatrix} $ with $ B_\psi:=\begin{bmatrix} B_{\psi v} & B_{\psi w}\end{bmatrix} $ and 
\begin{equation}
M_\lambda=\begin{bmatrix}
Q_\lambda & S_\lambda \\ S_\lambda' & R_\lambda 
\end{bmatrix}:=\sum_{k=1}^{N}\lambda_k
\begin{bmatrix}
\C_{z_k}' \\ \D_{z_kw}' \\ \D_{z_kd}'
\end{bmatrix}M_k
\begin{bmatrix}
\C_{z_k}' \\ \D_{z_kw}' \\ \D_{z_kd}'
\end{bmatrix}'
\end{equation}
with $ Q_\lambda=Q_\lambda' $ and $ R_\lambda=R_\lambda' $. 
From Assumption~\ref{asmp:3}-\ref{asmp:4} and the definition of $ \Lambda $, we have $ \Pi_{\lambda,vv}(j\omega)>0 $ and $ \Pi_{\lambda,ww}(j\omega)<0 $ for all $ \omega\in\R\cup\{\infty\} $. The multiplier $ \Pi_{\lambda} $ can yield a $ J $-spectral factorization $ (\widetilde{\Psi}_\lambda,\widetilde{M}_\lambda) $ with $ \widetilde{M}_\lambda=\diag(I_{n_v},-I_{n_w}) $ \cite[Lemma 4]{Seiler:2015}. The differential dynamics for the state-space realization of $ \widetilde{\Psi}_\lambda $ is
\begin{equation}\label{eq:delta-psi-lambda}
\begin{bmatrix}
\dot{\delta}_\psi \\ \delta_{z_\lambda}
\end{bmatrix}=
\begin{bmatrix}
A_\psi & B_{\psi v} & B_{\psi w} \\
{C}_{z_\lambda} & {D}_{z_\lambda v} & {D}_{z_\lambda w}
\end{bmatrix}
\begin{bmatrix}
\delta_\psi \\ \delta_v \\ \delta_w
\end{bmatrix}
\end{equation}
where $ C_{z_\lambda}=\widetilde{M}_\lambda (D_{z_\lambda}^{-1})'(B_\psi'X+S_\lambda) $ with $ X $ as the stabling solution to the $ ARE(A_\psi,B_\psi,Q_\lambda,S_\lambda,R_\lambda) $ and $ D_{z_\lambda}:=\begin{bmatrix}{D}_{z_\lambda v} & {D}_{z_\lambda w}\end{bmatrix} $ satisfying $ R_\lambda=D_{z_\lambda}'\widetilde{M}_\lambda D_{z_\lambda} $. Then the extended system of $ \delta G $ and $ \widetilde{\Psi}_\lambda $ can be represented by 
\begin{equation}\label{eq:diff-dynamics-closedloop}
\begin{bmatrix}
\dot{\delta}_\chi \\ \delta_{z_\lambda} \\ \delta_e
\end{bmatrix}=
\begin{bmatrix}
\widetilde{\A} & \widetilde{\B}_{w} & \widetilde{\B}_{d} \\
\widetilde{\C}_{z_\lambda} & \widetilde{\D}_{z_\lambda w} & \widetilde{\D}_{z_\lambda d} \\
\C_e & \D_{ew} & \D_{ed}
\end{bmatrix}
\begin{bmatrix}
\delta_\chi \\ \delta_w \\ \delta_d
\end{bmatrix}.
\end{equation}
The equivalent formulation to LMI \eqref{eq:LMI-soft} can be written as:
\begin{equation}\label{eq:LMI-hard}
\begin{split}
&
\begin{bmatrix}
\widetilde{P}\widetilde{\A}+\widetilde{\A}'\widetilde{P}+\dot{\widetilde{P}} & \widetilde{P}\widetilde{\B}_w & \widetilde{P}\widetilde{\B}_d \\
\widetilde{\B}_w'\widetilde{P} & 0 & 0 \\
\widetilde{\B}_d'\widetilde{P} & 0 & -\alpha^2I
\end{bmatrix}+
\begin{bmatrix}
\C_e' \\ \D_{ew}' \\ \D_{ed}'
\end{bmatrix}
\begin{bmatrix}
\C_e' \\ \D_{ew}' \\ \D_{ed}'
\end{bmatrix}' \\
&\quad +
\begin{bmatrix}
\widetilde{\C}_{z_\lambda}' \\ \widetilde{\D}_{z_\lambda w}' \\ \widetilde{\D}_{z_\lambda d}'
\end{bmatrix}\widetilde{M}_\lambda
\begin{bmatrix}
\widetilde{\C}_{z_\lambda}' \\ \widetilde{\D}_{z_\lambda w}' \\ \widetilde{\D}_{z_\lambda d}'
\end{bmatrix}'<0
\end{split} 
\end{equation}
where $ \widetilde{P}=P-\begin{bmatrix}0 & 0 \\ 0 & X\end{bmatrix}\geq 0 $ with $ X $ as the stabilizing solution to the $ ARE(A_\psi,B_\psi,Q_\lambda,S_\lambda,R_\lambda) $.
Left- and right-multiplying \eqref{eq:LMI-hard} with $ \col(\delta_\chi,\delta_w,\delta_d) $ and its transpose, and taking integration of $ t $ over $ [0,T] $ gives
\begin{equation}\label{eq:diff-gain}
\widetilde{V}_T-\widetilde{V}_0+\int_{0}^{T}\delta_{z_\lambda}'\widetilde{M}_\lambda\delta_{z_\lambda}\leq \int_{0}^{T}(\alpha^2|\delta_d|^2-|\delta_e|^2)dt
\end{equation}
where  $ \widetilde{V}_t $ denotes $\delta_\chi'(t)\Mc(\chi(t))\delta_\chi(t) $ with  $ \Mc=\widetilde{P}+\epsilon I $. Here $ \epsilon $ is a small positive constant. Note that $ (\widetilde{\Psi}_\lambda, \widetilde{M}_\lambda) $ is a hard factorization which implies $ \int_{0}^{T}\delta_{z_\lambda}'\widetilde{M}_\lambda\delta_{z_\lambda}\geq 0 $ for all $ T>0 $. Thus, \eqref{eq:diff-gain} is a differential dissipation condition which yields a differential $ L_2 $-gain bound of $ \alpha $ for $ \mathcal{F}_u(G,\varDelta) $. 


There are two major benefits for applying contraction analysis to nonlinear systems: local differential stability implies global incremental/global stability, and no knowledge about the reference trajectory is required. The following theorem shows that robust contraction analysis based on $ \delta $-IQC preserves these two features.
\begin{thm}\label{thm:main}
	Suppose that conditions for Proposition~\ref{prop:differential-L2-gain} are satisfied. If differential dissipation inequality \eqref{eq:diff-gain} holds for any solution $ \rho(\cdot) $ and any solution family $ \Omega_\rho $, then $ \Fc_u(G,\varDelta) $ has an incremental $ L_2 $-gain bound of $ \alpha $. If \eqref{eq:diff-gain} only holds for all $ \Omega_{\rho^*} $ with certain $ \rho^*(\cdot) $, then $ \Fc_u(G,\varDelta) $ has a global $ L_2 $-gain bound of $ \alpha $.  
\end{thm}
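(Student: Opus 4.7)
The plan is to lift the differential $ L_2 $-gain bound supplied by Proposition~\ref{prop:differential-L2-gain}, which holds pointwise along every one-parameter solution family, to an incremental (respectively global) $ L_2 $-gain bound by integrating the differential dissipation inequality \eqref{eq:diff-gain} over the spatial parameter $ s $ and invoking Cauchy--Schwarz.

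For the incremental assertion, fix two closed-loop trajectories $ \rho_0,\rho_1 $ of $ \Fc_u(G,\varDelta) $ with inputs $ d_0,d_1 $ and initial states $ x_0(0),x_1(0) $, and construct a one-parameter family $ \{\overline{\rho}(\cdot,s)\}_{s\in[0,1]} $ joining them by taking the linear input interpolation $ \overline{d}(t,s)=(1-s)d_0(t)+s d_1(t) $, a smooth interpolating initial state $ \overline{x}(0,s) $ (for instance the geodesic in $ \Mc $ from $ x_0(0) $ to $ x_1(0) $), and the filter initial state $ \overline{\psi}(0,s)=0 $ for every $ s $. By construction $ \delta_d(t,s)=d_1(t)-d_0(t) $ is $ s $-independent, so Proposition~\ref{prop:differential-L2-gain} applied to each family member gives
\begin{equation*}
\|(\delta_e(\cdot,s))_T\|^2 \leq \alpha^2\|(d_1-d_0)_T\|^2 + \widetilde{V}(\chi(0,s),\delta_\chi(0,s)).
\end{equation*}

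Since $ e_1(t)-e_0(t)=\int_0^1\delta_e(t,s)\,ds $, Cauchy--Schwarz in $ s $ followed by Fubini yields $ \|(e_1-e_0)_T\|^2\leq\int_0^1\|(\delta_e(\cdot,s))_T\|^2\,ds $, and integrating the previous bound over $ s\in[0,1] $ produces
\begin{equation*}
\|(e_1-e_0)_T\|^2 \leq \alpha^2\|(d_1-d_0)_T\|^2 + b(x_0(0),x_1(0)),
\end{equation*}
with $ b(x_0,x_1):=\int_0^1\widetilde{V}(\chi(0,s),\delta_\chi(0,s))\,ds\geq 0 $. Because $ \delta_\psi(0,s)\equiv 0 $ by construction and the trivial path $ \overline{x}(0,s)\equiv x $ may be chosen when $ x_0=x_1 $, one has $ b(x,x)=0 $, which is exactly the incremental $ L_2 $-gain bound. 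The global claim then follows by repeating the argument only on families $ \Omega_{\rho^*} $: interpolate from the reference $ \rho^* $ to the given $ \rho $, for which the hypothesis directly supplies \eqref{eq:diff-gain} at every $ s $, and the same Cauchy--Schwarz / Fubini manipulation yields the global bound against $ \rho^* $.

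The hardest step will be verifying that the constructed $ \overline{\rho}(\cdot,\cdot) $ is a bona fide one-parameter solution family, i.e.\ that it is $ \mathcal{C}^2 $ in $ s $ on $ [0,T]\times[0,1] $ so that the $ s $-derivative which drives \eqref{eq:diff-dynamics-closedloop} and the Fubini exchange are legitimate. This rests on smoothness of $ f,g,h $ together with sufficient regularity of $ \varDelta $; the latter is furnished by Proposition~\ref{prop:1}, which under Assumption~\ref{asmp:1} converts the $ \delta $-IQC hypothesis into a genuine incremental IQC on $ \partial\varDelta $, and by Assumption~\ref{asmp:4}, whose unit differential-gain bound excludes finite-escape of the closed loop. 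A secondary technicality is that $ \widetilde{V} $ lives on the extended state $ \chi=(x,\psi) $, so holding the filter initial condition fixed at zero along the entire family is essential for $ b $ to depend only on $ (x_0,x_1) $ and vanish on the diagonal.
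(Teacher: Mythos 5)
Your proposal is correct and follows essentially the same route as the paper: construct a one-parameter solution family by linearly interpolating the disturbance and smoothly interpolating the initial state, apply the differential dissipation inequality \eqref{eq:diff-gain} along each member, integrate over $s\in[0,1]$, and use Cauchy--Schwarz to pass from $\int_0^1|\delta_e|^2ds$ to $|e_1-e_0|^2$, with $b$ given by the path energy of the initial curve. Your explicit handling of the filter initial condition $\overline{\psi}(0,s)\equiv 0$ so that $b$ depends only on $(x_0(0),x_1(0))$ is a point the paper glosses over, but it is a refinement of the same argument rather than a different one.
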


\begin{proof}
 We prove the first claim that if \eqref{eq:diff-dissipation} is satisfied for all solution families, an incremental $ L_2 $-gain bound is guaranteed. For any pair of solutions $ \rho_0(\cdot) $ and $ \rho_1(\cdot) $, we consider an one-parameter solution family $ \Omega_{\rho_0} $ satisfying 
\begin{equation}
	\begin{split}
	\overline{x}(0,s)&=c(s,x_0(0),x_1(0)) \\
	\overline{d}(t,s)&=d_0(t)(1-s)+d_1(t)s \\
	\overline{e}(t,s)&=h(\overline{x},\overline{d})(t,s)
	\end{split}
\end{equation}
where $ c(\cdot,x_0(0),x_1(0)) $ is a smooth path joining $ x_0(0) $ and $ x_1(0) $. Thus, $ \Omega_{\rho_0} $ also satisfies $ \overline{\rho}(\cdot,1)=\rho_1(\cdot) $.
Substituting $ (\delta_x,\delta_d,\delta_e)=(\overline{x}_s,\overline{d}_s,\overline{e}_s) $ into \eqref{eq:diff-gain} and integrating it over $ [0,1] $ yield
\begin{equation}
\int_{0}^{T}\int_{0}^{1}|\overline{e}_s|^2dsdt\leq \alpha^2\int_{0}^{T}\int_{0}^{1}|\overline{d}_s|^2dsdt+\int_{0}^{1}c_s'\Mc c_sds.
\end{equation}
This gives the incremental $ L_2 $-gain condition \eqref{eq:global-L2} with $ b(x_0,x_1)=\ell^2(c(\cdot,x_0(0),x_1(0))) $ since $ \overline{d}_s=d_1-d_0 $ and $ |e_1-e_0|^2=|\int_{0}^{1}\overline{e}_sds|^2\leq \int_{0}^{1}|\overline{e}_s|^2ds $ (Cauchy-Schwarz inequality). For the second claim, it is straight forward by following the above steps.
\end{proof}

\section{Illustrative Example}\label{sec:example}
A simplified model of surge-stall dynamics of a jet engine has the form of
\begin{equation}
	\begin{bmatrix}
	\dot{\psi} \\ \dot{\phi}
	\end{bmatrix}=f(x)+Bu+Ed:=
	\begin{bmatrix}
	\phi+u \\ -\psi-\frac{3}{2}\phi^2-\frac{1}{2}\phi^3+d
	\end{bmatrix}
\end{equation}
where $ x=(\phi,\psi) $ is the state, $ u $ is the control input, and $ d $ is external disturbance. Here $ \phi $ is a measure of mass flow through the compressor, and $ \psi $ is a measure of pressure rise in the compressor. By implementing the CCM based control synthesis approach (\cite{Manchester:2018}), we found a constant Riemannian metric and a differential controller 
\begin{equation}
	\delta_u=K(x)\delta_x
\end{equation}
which achieves a bounded global $ L_2 $ gain of $ \alpha=0.93 $ from $ d $ to $ e:=\phi+0.1u $ within the region $ |\phi|\leq 1 $. The control realization based on integration along geodesics is
\begin{equation}
	u(t)=\kappa(x,x^*,u^*)(t):=u^*(t)+\int_{0}^{1}K(\gamma(t,s))\gamma_sds
\end{equation}
where $ (x^*,u^*)(\cdot) $ is a reference trajectory, $ \gamma(t,\cdot) $ is the geodesic joining $ x^*(t) $ to $ x(t) $. For constant metric, there exist a unique geodesic $ \gamma(t,s)=x^*(t)(1-s)+x(t)s $.

The $ \delta $-IQC based contraction analysis problem we considered here is the performance degradation caused by uncertain input delays:
\begin{equation}
	u(t)=\kappa(x,x^*,u^*)(t-\theta)
\end{equation}
where $ \theta\in[0,\Theta] $ with $ \Theta $ as a known bound. The perturbed nonlinear system can be represented as a feedback interconnection of a nominal model $ G $ and a perturbation $ \varDelta $:
\begin{equation}
	\begin{split}
	G:\; &
	\begin{cases}
	\dot{x}&=f(x)+Bk(x,x^*,u^*)+Bw+Ed \\
	e&=Cx+Dk(x,x^*,u^*)+Dw \\
	v&=k(x,x^*,u^*) 
	\end{cases} \\
	\varDelta:\; & \quad w=v(t-\theta)-v(t).
	\end{split}
\end{equation}
Since the uncertainty $ \varDelta $ is a linear infinite-dimensional system, the differential operator $ \partial\varDelta $ exists and shares the same multipliers for $ \delta $-IQCs and conventional IQCs (Corollary~\ref{coro:2}). We can obtain a simple (and not complete) set of $ \delta $-IQCs from \cite{Megretski:1997}:
\begin{equation}\label{eq:iqc}
\begin{split}
\left|\hat{\delta}_v(j\omega)\right|^2-\left|\hat{\delta}_v(j\omega)+\hat{\delta}_w(j\omega)\right|^2&\geq 0 \\
\eta(\Theta\omega)\left|\hat{\delta}_v(j\omega)\right|^2-\left|\hat{\delta}_w(j\omega)\right|^2&\geq 0
\end{split}
\end{equation}
where 
\begin{equation}
\eta(\omega)=\frac{\omega^2+0.08\omega^4}{1+0.13\omega^2+0.02\omega^4}.
\end{equation}
Solving the parameter-dependent LMI \eqref{eq:LMI-soft} problem took using SOS programming (Yalmip \cite{Lofberg:2004} and SDP solver Mosek). The results are shown in Table~\ref{tab:1}. System performance deterioration is observed as the input delay increases.

\begin{table}[!bt]
	\caption{Global $ L_2 $-gain bound $\alpha$ vs delay bound  $ \Theta $ }\label{tab:1}
	\centering
	\begin{tabular}{|c|c|c|c|c|c|}
		\hline \hline
		$ \Theta $ & 0 & 0.04 & 0.08 & 0.12 & 0.16 \\
		\hline
		$ \alpha $ & 0.93 & 1.44 & 4.87 & 13.59 & 546.3 \\ 
		\hline
	\end{tabular}
\end{table}

\section{Conclusion}
This paper extended the time-domain IQC theorem to the contraction analysis framework. This is quite a natural development, since contraction is based on the study of differential dynamics, which can be interpreted as a special type of LPV system. However, by integrating along paths in state-space we obtain rigorous global results for a nonlinear uncertain system.  Our approach leads to a computationally tractable condition to assess the robust contraction performance of a nominal nonlinear system interconnected with uncertainties described by differential IQCs. Future work will consider the synthesis of robust controllers for uncertain nonlinear systems using the approach of \cite{Manchester:2018}.



\bibliographystyle{IEEEtran}
\bibliography{ref}
\end{document}